\title{A Separation of $\gamma$ and $b$ via Thue--Morse Words}
\date{}
\newcommand{\tm}[1]{t_{#1}}
\newcommand{\ground}[1]{\#_g(#1)}
\author{
  Hideo~Bannai\inst{1,4}\orcidID{0000-0002-6856-5185}
\and
  Mitsuru~Funakoshi\inst{2}\orcidID{0000-0002-2547-1509}
\and
  Tomohiro~I\inst{3}\orcidID{0000-0001-9106-6192}
\and
 Dominik~Koeppl\inst{1}\orcidID{0000-0002-8721-4444}
\and
  Takuya~Mieno\inst{2}\orcidID{0000-0003-2922-9434}
\and
  Takaaki~Nishimoto\inst{4}
}
\institute{
  M\&D Data Science Center, Tokyo Medical and Dental University, Japan\\
  \email{\{hdbn,koeppl\}.dsc@tmd.ac.jp}
\and
  Department of Informatics, Kyushu University, Japan\\
  \email{\{mitsuru.funakoshi,takuya.mieno\}@inf.kyushu-u.ac.jp}
\and
  Department of Artificial Intelligence, Kyushu Institute of Technology, Japan\\
  \email{tomohiro@ai.kyutech.ac.jp}
\and
  RIKEN Center for Advanced Intelligence Project, Japan\\
  \email{takaaki.nishimoto@riken.jp}
}
\begin{document}
\maketitle
\begin{abstract}
  We prove that for $n\geq 2$, the size $b(t_n)$ of the smallest bidirectional scheme for the $n$th Thue--Morse word $t_n$ is $n+2$.
  Since Kutsukake et al. [SPIRE 2020] show that the size $\gamma(t_n)$ of the smallest string attractor for $t_n$ is $4$
  for $n \geq 4$, this shows for the first time that
  there is a separation between the size of the smallest string attractor $\gamma$ and the size of the smallest bidirectional scheme $b$, i.e., there exist string families such that $\gamma = o(b)$.
\end{abstract}
\pagestyle{plain}

\section{Introduction}
Repetitiveness measures for strings is an important topic in the field of string compression and indexing.
Compared to traditional entropy-based measures, measures based on dictionary compression
are known to better capture the repetitiveness
in highly repetitive string collections~\cite{10.1145/3434399}.
Some well known examples of dictionary-compression-based measures are:
the size $r$ of the run-length Burrows--Wheeler transform~\cite{Burrows94ablock-sorting} (RLBWT),
the size $z$ of the Lempel-Ziv 77 factorization~\cite{DBLP:journals/tit/ZivL77},
the size $b$ of the smallest bidirectional (or macro) scheme~\cite{DBLP:journals/jacm/StorerS82}.

Kempa and Prezza introduced the notion of {\em string attractors}~\cite{DBLP:conf/stoc/KempaP18},
which gave a unifying view of dictionary-compression-based measures.
A string attractor of a string is a set of positions such that any substring of the string has at least one occurrence which contains a position in the set.
The size $\gamma$ of the smallest string attractor of a word is a lower bound on the size
of all known dictionary compression measures, but is NP-hard to compute.
Kociumaka et al.~\cite{DBLP:conf/latin/KociumakaNP20,kociumaka2021definitive} introduced
another measure $\delta \leq \gamma$ that is computable in linear time,
defined as the maximum over all integers $k$, the number of distinct substrings of
length $k$ in the string divided by $k$.

The landscape of the relations between these measures has been a focus of attention.
For example, since $z$ is a special case of a bidirectional scheme, $b \leq z$.
Also, $b \leq 2r$~\cite{DBLP:journals/tit/NavarroOP21} and
$r = O(z\log^2 N)$~\cite{DBLP:conf/focs/KempaK20} hold, where $N$ is the length of the string.
Notice that a string can be represented in space (with an extra factor of $\log N$ for bits)
proportional to $b$, $r$, or $z$.
Interestingly, while $\delta$ and $\gamma$ do not give a direct representation of the string,
it is known that the string can be represented
in $O(\delta\log\frac{N}{\delta})$ or $O(\gamma\log\frac{N}{\gamma})$ space, respectively~\cite{DBLP:conf/stoc/KempaP18,DBLP:conf/latin/KociumakaNP20,kociumaka2021definitive}.
On the other hand, Kociumaka et al.~\cite{DBLP:conf/latin/KociumakaNP20,kociumaka2021definitive}
showed that for every length $N$ and integer $\delta \in [2,N]$,
there exists a family of length-$N$ strings having the same measure $\delta$,
that requires $\Omega(\delta\log\frac{N}{\delta} \log N)$ bits to be encoded.
Analogous results for $\gamma$ are not yet known~\cite{DBLP:conf/latin/KociumakaNP20,kociumaka2021definitive,10.1145/3434399}.
The bidirectional scheme is the most powerful among the dictionary-compression-based measures.
The size $b$ of the smallest bidirectional scheme is also known
to satisfy $b= O(\gamma\log \frac{N}{\gamma})$, but again, the tightness of this bound was not known~\cite{10.1145/3434399}.

Following Mantaci et al.~\cite{DBLP:conf/ictcs/MantaciRRRS19,DBLP:journals/tcs/MantaciRRRS21},
Kutsukake et al.~\cite{DBLP:conf/spire/KutsukakeMNIBT20} investigated
repetitiveness measures on Thue--Morse words~\cite{prouhet1851,thue1906,morse1921}
and showed that the size of the smallest string attractor for the $n$-th Thue--Morse word is $4$,
for any $n\geq 4$. They also conjectured that the size of the smallest bidirectional scheme
for the $n$-th Thue--Morse word (which has length $N=2^n$) is $\Theta(\log N)$,
which would imply a separation between $\gamma$ and $b$.
Possibly due to the difficulty (NP-hardness)
of computing the size of the smallest bidirectional scheme of a string~\cite{DBLP:journals/jacm/StorerS82},
tight bounds for $b$ have only been discovered for a very limited family of strings, most notably
standard Sturmian words~\cite{DBLP:journals/ipl/MantaciRS03}.
This was shown from the fact that the size $r$ of the RLBWT of every standard Sturmian word is $2$,
therefore implying a constant upper bound on the smallest bidirectional scheme.

In this paper, we prove Kutsukake et al.'s conjecture
by showing that for any $n\geq 2$, the size $b(t_n)$ of the smallest bidirectional scheme for $t_n$ is exactly $n+2$.
For any value of $\gamma \geq 4$, we can construct a family of strings such that
$b = \Theta(\gamma\log\frac{N}{\gamma})$ and $N$ is the length of the string.
Our result shows for the first time the separation between $\gamma$ and $b$,
i.e., there are string families such that $\gamma=o(b)$.

\section{Preliminaries}
We consider the alphabet $\Sigma = \{ \mathtt{a}, \mathtt{b}\}$.
A string is an element of $\Sigma^*$.
For any string $w \in \Sigma^*$,  let $|w|$ denote its length, and
let $w = w[0]\cdots w[|w|-1]$.
Also, for any $0 \leq i \leq j < |w|$,
let $w[i..j] = w[i]\cdots w[j]$.

A \emph{string morphism} $\mu$ is a function mapping strings to strings such that each character is replaced by a single string (deterministically), i.e., $\mu(w) = \mu(w[0])\cdots\mu(w[|w|-1])$ for any string $w$.
Let $\mu^0(w) = w$, and for any integer $n \geq 1$, let $\mu^n(w) = \mu(\mu^{n-1}(w))$.
Now let $\mu$ be the morphism on the binary alphabet determined by $\mu(\mathtt{a}) = \mathtt{ab}$ and $\mu(\mathtt{b}) = \mathtt{ba}$.
Then the $n$-th \emph{Thue--Morse word} $t_n$ is $\mu^n(\mathtt{a})$, and its length is $|t_n| = 2^n$.

A list of strings $b_1,\ldots,b_k$ is called a {\em parsing} of a string $S$,
if $S = b_1\cdots b_k$. Each $b_i~(i=1,\ldots,k)$ is called a {\em phrase}.
A sequence $B= ((b_1,s_1), \ldots, (b_k,s_k))$
is a {\em bidirectional scheme} for $S$,
if $b_1,\ldots, b_k$, is a parsing of $S$
and for all $i = 1,\ldots, k$,
$s_i \in [0,|S|-1]\cup\{\bot\}$, such that
$s_i = \bot$ if $|b_i|=1$, and $b_i =S[s_i..s_i+|b_i|-1]$ otherwise.
We denote the {\em size} $k$ of the bidirectional scheme~$B$ by $|B|$.
We call $s_i$ the {\em source} of the phrase~$b_i$.

If $|b_i| = 1$ then we stipulate that $s_i=\bot$, and call $b_i$ a {\em ground} phrase.
(Consequently, there are no phrases of length one that have a source being a text position.)
We denote the number of ground phrases in $B$ by $\ground{B}$.
For convenience, we denote the starting position of phrase $b_i$ by $p_i$,
i.e., $p_1=0$ and $p_i=|b_1\cdots b_{i-1}|$ for all $i=2,\ldots,k+1$.

A bidirectional scheme $B$ for the string $S$ defines a
function $f_B:[0,|S|-1]\cup\{\bot\} \rightarrow [0,|S|-1]\cup\{\bot\}$
over positions of $S$, where
\[
  f_B(x) =
  \begin{cases}
    \bot        & \mbox{if } x = \bot \mbox { or if } x = p_i, s_i =\bot                    \mbox{ for some $i$,} \\
    s_i+x-p_{i} & \mbox{otherwise, i.e., if }p_{i}\leq x < p_{i+1}, s_i\neq\bot \mbox{ for some $i$.}             \\
  \end{cases}
\]
Let $f_B^0(x) = x$, and for any $j\geq 1$, let $f_B^j(x) = f_B(f_B^{j-1}(x))$.
It is clear that if $f_B(i) \neq \bot$, then it holds that $S[i] = S[f_B(i)]$.
A bidirectional scheme for $S$ is {\em valid}, if
there is no $i\in [0,|S|-1]$ such that
the function~$f_B$ contains a cycle, that is,
for every $i \in [0,|S|-1]$, there exists a $j\geq 1$ such that $f_B^j(i) = \bot$.
A valid bidirectional scheme $B$ of size $k$ for $S$ implies an $O(k)$-word size (compressed) representation of $S$,
namely, the sequence $((|b_1|,s'_1), \ldots, (|b_k|,s'_k))\subset ([1,|S|]\times\{[0,|S|-1]\cup\Sigma\})^k$,
where $s'_i = b_i$ if $s_i=\bot$, and $s'_i=s_i$ otherwise.
Note that the string $S$ can be reconstructed from this sequence if and only if $B$ is valid.
A parsing $b_1, \ldots, b_k$ of~$S$ is \emph{valid} if there exists
a list of phrase sources $s_1,\dots, s_k$ such that $((b_1,s_1), \ldots, (b_k,s_k))$ is a valid bidirectional scheme for $S$.

Informally, $f_B(x)$ gives the position (source) from where we want to copy the character that restores $S[x]$ when reconstructing
$S$ from the compressed representation,
where $f_B(x) = \bot$ indicates that the character is stored as a ground phrase, i.e., as a literal.

It is easy to see that a valid bidirectional scheme must have at least as many ground phrases
as there are different characters appearing in~$S$ (the number of ground phrases is at least $|\Sigma|$ if all characters of $\Sigma$ appear in $S$).

\section{Important Characteristics of Thue--Morse Words}
Before proving our bounds, we first give some simple observations on Thue--Morse words that we will use later.
Remember that the first index of $t_n$ is 0, which is an even position.

\begin{lemma}\label{lem:aabb}
  $\mathtt{aa}$ and $\mathtt{bb}$ only occur at odd positions in $t_n$.
\end{lemma}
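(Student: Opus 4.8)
The plan is to exploit the block structure that the morphism $\mu$ imposes on $t_n$. For $n \geq 1$ we have $t_n = \mu(t_{n-1})$, so $t_n$ is the concatenation of the length-$2$ blocks $\mu(t_{n-1}[0]), \mu(t_{n-1}[1]), \ldots$, where the $j$-th block occupies positions $2j$ and $2j+1$ and therefore begins at an even position. The key feature is that each such block equals either $\mu(\mathtt{a}) = \mathtt{ab}$ or $\mu(\mathtt{b}) = \mathtt{ba}$, and in both cases its two characters differ.

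First I would record the immediate consequence that $t_n[2j] \neq t_n[2j+1]$ for every valid index $j$; that is, the two symbols of a single block are never equal. Next, observe that any occurrence of $\mathtt{aa}$ or $\mathtt{bb}$ consists of two consecutive equal characters, say at positions $i$ and $i+1$ with $t_n[i] = t_n[i+1]$. If $i$ were even, then $i = 2j$ for some $j$, so positions $i$ and $i+1$ would be exactly the two positions of the $j$-th block, forcing $t_n[2j] = t_n[2j+1]$ and contradicting the previous observation. Hence $i$ must be odd, which is precisely the claim.

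The argument is essentially immediate once the block decomposition is in place, so I do not expect a genuine obstacle; the only care needed is to state explicitly that every block starts at an even position and that within a block the two symbols always differ. The boundary cases where $t_n$ is too short to contain a square of a single letter (e.g.\ $t_0 = \mathtt{a}$) are handled vacuously, since then no occurrence of $\mathtt{aa}$ or $\mathtt{bb}$ exists at all.
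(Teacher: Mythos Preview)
Your argument is correct and is essentially the same as the paper's: both exploit that $t_n = \mu(t_{n-1})$ decomposes into length-$2$ blocks $\mathtt{ab}$ or $\mathtt{ba}$ starting at even positions, so no length-$2$ factor at an even position can be $\mathtt{aa}$ or $\mathtt{bb}$. The paper states this in one line, while you spell out the contrapositive more explicitly, but the underlying idea is identical.
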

\begin{proof}
  The morphism $\mu$ implies that any substring of length 2 starting at an even position is either $\mu(\mathtt{a}) = \mathtt{ab}$ or $\mu(\mathtt{b}) = \mathtt{ba}$.
  \qed
\end{proof}

\begin{lemma}[Theorem 2.2.3 of~\cite{berstel_reutenauer_1997}]\label{lem:overlapfree}
  $t_n$ has no overlapping factors, i.e.,
  two occurrences of the same string in $t_n$ never share a common position.
\end{lemma}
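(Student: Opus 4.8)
The plan is to prove the classical \emph{overlap-freeness} of $t_n$ by induction on $n$, exploiting the self-similar identity $t_n = \mu(t_{n-1})$ together with Lemma~\ref{lem:aabb}. First I would reformulate the statement in terms of periods: two occurrences of a common factor sharing a position means that some factor $W$ of $t_n$ has a period $p$ with $|W| \ge 2p+1$, and by passing to the length-$(2p+1)$ prefix of $W$ this is equivalent to saying that $t_n$ contains a factor of the form $c\,x\,c\,x\,c$ with $c \in \Sigma$, $x \in \Sigma^*$, and period $p = |x|+1$. So it suffices to show that no such factor occurs. The base cases (small $n$) are finite and checked directly; I then assume as inductive hypothesis that $t_{n-1}$ is overlap-free and suppose for contradiction that $t_n$ contains an overlap.

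Next I would rule out the smallest period. If $p = 1$ the overlap is $ccc$; but then $\mathtt{aa}$ or $\mathtt{bb}$ would occur at two consecutive positions $m, m+1$, one of which is even, contradicting Lemma~\ref{lem:aabb}. Hence $p \ge 2$, and in particular $t_n$ has no three equal consecutive letters.

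The heart of the argument is a descent turning an overlap in $t_n$ into a shorter overlap in $t_{n-1}$. I would read $t_n$ as a concatenation of the blocks $t_n[2k]\,t_n[2k+1] \in \{\mathtt{ab},\mathtt{ba}\}$; since the two letters of each block differ, Lemma~\ref{lem:aabb} says equivalently that $t_n[m] = t_n[m+1]$ forces $m$ to be odd. Let $W = c\,x\,c\,x\,c$ occur at position $i$, with anchor letters $c$ at $i,\,i+p,\,i+2p$. I would first show that $p$ must be even: if $p$ were odd the three anchors would not all share a parity, and combining the period relations $t_n[j]=t_n[j+p]$ with the block constraint forces either an equal adjacent pair at an even position or three equal consecutive letters, a contradiction. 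Once $p = 2p'$ is even, the overlap aligns with the block grid up to a one-position shift; after moving the window to an even boundary and trimming the single boundary anchor, $W$ becomes $\mu(W')$ for a factor $W'$ of $t_{n-1}$, and halving the period shows that $W'$ is an overlap of period $p' < p$ in $t_{n-1}$. This contradicts the inductive hypothesis and closes the step.

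The hard part will be exactly this parity-and-alignment bookkeeping in the descent: one must track the parity of both $i$ and $p$, decide on which side to trim the boundary anchor so that the remaining window is a clean image under $\mu$, and verify that halving the period yields a genuine overlap rather than merely a square or a periodic factor that has become too short. This is a finite but delicate case analysis, and the main risk is overlooking one of the alignment cases; everything else, namely the base cases and the elimination of $p=1$, follows immediately from Lemma~\ref{lem:aabb}.
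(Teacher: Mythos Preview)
The paper does not prove this lemma at all: it is stated as Theorem~2.2.3 of~\cite{berstel_reutenauer_1997} and used as a black box, with no argument given. Your proposal therefore goes beyond what the paper does.

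What you outline is essentially the classical proof of overlap-freeness of the Thue--Morse word: recast an overlap as a factor $cxcxc$ of period $p=|x|+1$, eliminate $p=1$ via the block structure (your appeal to Lemma~\ref{lem:aabb} is exactly right, since $ccc$ would force an occurrence of $cc$ at an even position), argue that $p$ must be even by a parity analysis of the anchors against the $\mu$-blocks, and then descend through $\mu^{-1}$ to a strictly shorter overlap in $t_{n-1}$, contradicting induction. The plan is sound and standard; the alignment bookkeeping you flag as the hard part is genuinely where the work sits, but it is a finite case analysis with no hidden obstacles once you fix the parity of the starting position and trim one anchor as you describe. For the purposes of this paper, however, the citation is all that is needed.
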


\begin{lemma}\label{lem:abab}
  $\mathtt{abab}$ and $\mathtt{baba}$ only occur at even positions in $t_n$.
\end{lemma}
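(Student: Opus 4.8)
The plan is to argue by contradiction and manufacture a forbidden overlapping factor, using only the morphism structure together with Lemma~\ref{lem:overlapfree}. Suppose $\mathtt{abab}$ occurs at an odd position $i = 2k+1$ of $t_n$. The first step is to determine the character immediately to its left, at the even position $2k$. Because $2k$ is even, the length-$2$ block $t_n[2k..2k+1]$ is a full image under $\mu$, hence equals $\mu(\mathtt{a}) = \mathtt{ab}$ or $\mu(\mathtt{b}) = \mathtt{ba}$ (this is exactly the observation used in the proof of Lemma~\ref{lem:aabb}). Since the occurrence gives $t_n[2k+1] = \mathtt{a}$, this block must be $\mathtt{ba}$, forcing $t_n[2k] = \mathtt{b}$. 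Combined with $t_n[2k+1..2k+4] = \mathtt{abab}$, I obtain $t_n[2k..2k+4] = \mathtt{babab}$.

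The second step is to note that $\mathtt{babab}$ is an overlapping factor: the string $\mathtt{bab}$ occurs at offset $0$ and again at offset $2$ within it, and these two occurrences share the middle position. Hence $\mathtt{babab}$ cannot appear in $t_n$ by Lemma~\ref{lem:overlapfree}, a contradiction, so every occurrence of $\mathtt{abab}$ must begin at an even position. The case of $\mathtt{baba}$ is handled by the same argument with $\mathtt{a}$ and $\mathtt{b}$ exchanged: an odd occurrence at $2k+1$ has $t_n[2k+1] = \mathtt{b}$, which forces the even block $t_n[2k..2k+1] = \mathtt{ab}$ and hence $t_n[2k] = \mathtt{a}$, producing the overlapping factor $\mathtt{ababa}$ and again contradicting Lemma~\ref{lem:overlapfree}. (The symmetry is legitimate because exchanging $\mathtt{a}$ and $\mathtt{b}$ commutes with $\mu$, sending $t_n$ to its complement and preserving overlap-freeness, so one may even derive the second case from the first.)

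A boundary check completes the plan: since the occurrence starts at $i = 2k+1 \geq 1$ we have $2k \geq 0$, and since it ends at $i+3 \leq 2^n-1$ the length-$5$ window $t_n[2k..2k+4]$ lies entirely within $t_n$, so the left-extension is always well defined; no separate base case is needed, as short Thue--Morse words simply contain no such factor. The only genuine subtlety is the first step—verifying that the single extension to position $2k$ is forced by the morphism, and that the resulting length-$5$ word is truly an overlap rather than a benign square such as $\mathtt{abab}$ itself—and both points are immediate once the even-block structure is invoked. Everything else is routine, so I expect no real obstacle beyond stating these two observations carefully.
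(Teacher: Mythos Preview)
Your argument is correct and is essentially identical to the paper's: you assume an odd occurrence of $\mathtt{abab}$, use the even-block structure (equivalently Lemma~\ref{lem:aabb}) to force a preceding $\mathtt{b}$, and derive the overlapping factor $\mathtt{babab}$ contradicting Lemma~\ref{lem:overlapfree}. The paper's proof omits the explicit $\mathtt{baba}$ case and the boundary check you include, but otherwise the reasoning matches exactly.
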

\begin{proof}
  Suppose to the contrary that there is an occurrence of $\mathtt{abab}$ that starts at an odd position.
  Then, Lemma~\ref{lem:aabb} implies that $\mathtt{b}$ occurs immediately left of $\mathtt{abab}$,
  i.e., there is an occurrence of the substring $\mathtt{babab}$, thus contradicting Lemma~\ref{lem:overlapfree}
  with the substring $\mathtt{bab}$ having two overlapping occurrences.
  \qed
\end{proof}

Let the {\em parity} of an integer $i$ be $i \bmod 2 \in \{0, 1\}$.
\begin{lemma}\label{lem:parity}
  For any substring $w \not \in \{\mathtt{aba}, \mathtt{bab}, \mathtt{ab}, \mathtt{ba}, \mathtt{a}, \mathtt{b}\}$ of $t_n$,
  the parities of all occurrences of $w$ in $t_n$ are the same.
\end{lemma}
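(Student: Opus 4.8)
The plan is to separate the argument according to whether $w$ contains the factor $\mathtt{aa}$ or $\mathtt{bb}$, and to drive both cases with a single \emph{anchoring} observation: if $w$ contains, at some fixed offset $d$, a factor $u = w[d..d+|u|-1]$ whose occurrences in $t_n$ are all known to share a parity, then the occurrences of $w$ must share a parity too. Indeed, since $w$ is a contiguous factor, every occurrence of $w$ at a position $p$ induces an occurrence of $u$ at position $p+d$; if all occurrences of $u$ are (say) odd, then $p+d$ is forced to be odd, so $p \equiv 1-d \pmod 2$ is pinned to a single parity class. Thus it suffices to locate, inside every admissible $w$, a factor that Lemma~\ref{lem:aabb} or Lemma~\ref{lem:abab} already parity-pins.

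First I would treat the case where $w$ contains $\mathtt{aa}$ or $\mathtt{bb}$, say $w[d..d+1] \in \{\mathtt{aa}, \mathtt{bb}\}$ for some offset $d$. By Lemma~\ref{lem:aabb} this factor occurs only at odd positions, so the anchoring observation immediately fixes the parity of all occurrences of $w$. (Should $w$ contain several such factors at offsets of differing parity, this would merely say that $w$ cannot occur in $t_n$ at all, which is excluded by hypothesis.)

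It then remains to handle the case where $w$ is $\{\mathtt{aa}, \mathtt{bb}\}$-free, i.e.\ $w$ is \emph{alternating} (consecutive letters differ). The alternating words of length at most $3$ are exactly $\mathtt{a}, \mathtt{b}, \mathtt{ab}, \mathtt{ba}, \mathtt{aba}, \mathtt{bab}$, which are precisely the six excluded words. Hence the assumption that $w$ is none of these forces $|w| \geq 4$, and then the length-$4$ prefix of $w$ is either $\mathtt{abab}$ or $\mathtt{baba}$. By Lemma~\ref{lem:abab} this prefix occurs only at even positions, so the anchoring observation with $d=0$ pins every occurrence of $w$ to even positions. This exhausts all cases.

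I do not expect a genuine obstacle here, since the whole argument is a two-way case split glued together by the anchoring observation. The one point that truly needs care is the \emph{completeness} of the case analysis: one must check that the six excluded words are \emph{exactly} the alternating words of length at most $3$, so that every remaining $w$ necessarily either contains $\mathtt{aa}$/$\mathtt{bb}$ or is alternating of length at least $4$. Confirming that the anchoring step is watertight — that an occurrence of $w$ genuinely carries its anchoring factor at the stated offset, which is immediate from $w$ being contiguous — then closes the proof.
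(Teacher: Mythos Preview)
Your argument is correct and is essentially the same as the paper's: the paper's one-line proof simply asserts that any such $w$ contains one of $\mathtt{aa}$, $\mathtt{bb}$, $\mathtt{abab}$, $\mathtt{baba}$ as a factor and invokes Lemmas~\ref{lem:aabb} and~\ref{lem:abab}. You have spelled out the same case split (doubled letter vs.\ alternating of length $\geq 4$) and made the anchoring step explicit, but the underlying idea is identical.
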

\begin{proof}
  Any such substring $w$ contains at least one of $\{\mathtt{aa}, \mathtt{bb}, \mathtt{abab}, \mathtt{baba}\}$
  as a substring, and thus the result follows from Lemmas~\ref{lem:aabb} and~\ref{lem:abab}.\qed
\end{proof}

Further, we use that $t_n$ is a prefix of $t_{n+1}$ and $t_n[0..4] = \mathtt{abbab}$ for $n \ge 3$.

\section{Upper and Lower Bounds on $b$}

We start with the upper bound on the smallest size of a (valid) bidirectional parsing by constructing such a parsing,
and subsequently show that this bound is optimal by showing a lower bound whose proof is more involved.
\subsection{Upper Bound}
\begin{theorem}[Upper bound]\label{thm:upperbound}
  For $n\geq 2$,
  there exists a valid bidirectional scheme for $\tm{n}$ of size $n+2$.
\end{theorem}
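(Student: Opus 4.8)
The plan is to prove $b(\tm{n})\le n+2$ by induction on $n$, increasing the phrase count by exactly one at each step by exploiting the morphic self-similarity $\tm{n}=\mu(\tm{n-1})$. The base case $n=2$ is immediate: since $\tm{2}=\mathtt{abba}$ has no repeated factor of length two, no copy phrase is possible, so every phrase must be a ground phrase and the trivial scheme has size $4=2+2$. For the inductive step I assume a valid bidirectional scheme $B_{n-1}$ for $\tm{n-1}$ of size $(n-1)+2$ and build a valid scheme $B_n$ for $\tm{n}$ of size $n+2$.

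First I would \emph{lift} $B_{n-1}$ through $\mu$: replacing every phrase $b_i$ by $\mu(b_i)$ and every source $s_i$ by $2s_i$ yields a parsing $B'$ of $\tm{n}$, because $\mu$ doubles lengths and $\tm{n}[2s_i\mathinner{..}]=\mu(\tm{n-1}[s_i\mathinner{..}])$ certifies each copy. The crucial point is that $\mu$ aligns the induced copy functions: one checks that $f_{B'}(2j)=2 f_{B_{n-1}}(j)$ and $f_{B'}(2j+1)=2 f_{B_{n-1}}(j)+1$ for every position $j$ lying in a copy phrase of $B_{n-1}$. Consequently $B'$ inherits acyclicity, and every chain of $B'$ is driven into one of the length-$2$ phrases that are the images of the ground phrases of $B_{n-1}$ — exactly the positions at which $B_{n-1}$'s chains terminated. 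Since $\mu(\mathtt{a})=\mathtt{ab}$ and $\mu(\mathtt{b})=\mathtt{ba}$ have length two, these images are no longer ground phrases, so $\ground{B'}=0$ and $B'$ is not yet valid.

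To repair this I would spend the single extra phrase. I split the image of one chosen ground phrase, say the $\mathtt{ab}=\mu(\mathtt{a})$ sitting at some position $2p$, into the two literal ground phrases $\mathtt{a}$ and $\mathtt{b}$; this raises the size to exactly $n+2$ and restores $\ground{B_n}=2$. The image $\mathtt{ba}=\mu(\mathtt{b})$ of the other ground phrase stays a length-$2$ phrase and must be assigned a source: I would point it at an occurrence of $\mathtt{ba}$ adjacent to the freshly created ground phrases, so that its chain reaches $\{2p,2p+1\}$ in a bounded number of steps. By the doubling identity every remaining chain already funnels into one of these two distinguished phrases, so once the repair is in place every chain terminates at a ground phrase.

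The main obstacle is establishing validity of the repaired scheme, that is, acyclicity together with the guarantee that the rerouted $\mathtt{ba}$-phrase genuinely reaches the new ground phrases instead of looping back into itself. I would control this by carrying a stronger inductive invariant: that $B_{n-1}$ has exactly two ground phrases at consecutive positions and that the single hand-sourced phrase lies immediately next to them and terminates in one or two steps. Overlap-freeness (Lemma~\ref{lem:overlapfree}) and the parity constraints (Lemma~\ref{lem:parity}) pin down precisely where $\mathtt{ab}$ and $\mathtt{ba}$ may occur, which is what should let me both locate the required adjacent occurrence and exclude a cycle. Verifying that this invariant is reproduced by the lift-and-repair step — together with checking the small initial cases $n=2,3$ directly, the passage $n=2\to 3$ being slightly irregular since $B_2$ is all ground — is the technical heart of the argument.
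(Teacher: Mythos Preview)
Your high-level strategy coincides with the paper's: induct on $n$, lift $B_{n-1}$ through $\mu$ (doubling sources, inheriting acyclicity via $f_{B'}(2j+\varepsilon)=2f_{B_{n-1}}(j)+\varepsilon$), then fix up the two length-$2$ images of the old ground phrases at the cost of one extra phrase. The divergence --- and the gap you yourself flag --- is entirely in the repair step.

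You propose to split $\mu(\mathtt{a})=\mathtt{ab}$ into two literals and to source the remaining $\mu(\mathtt{b})=\mathtt{ba}$ to a nearby occurrence, then carry a structural invariant (ground phrases consecutive, hand-sourced phrase adjacent) through the induction. This can probably be made to work, but it is more machinery than needed, and as you say you have not verified that the invariant propagates. There is also a concrete obstruction hidden in your particular choice: the $\mathtt{ba}$ phrase sits at an even position $2q$, so its $\mathtt{b}$ is at an even position, whereas your ground $\mathtt{b}$ is at the odd position $2p+1$; since $f_{B'}$ preserves parity everywhere else, you \emph{must} introduce a parity flip in the hand-assigned source, and ``an occurrence of $\mathtt{ba}$ adjacent to the ground pair'' does not obviously supply one.

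The paper's repair avoids all of this with one observation. It keeps $\mu(\mathtt{b})=\mathtt{ba}$ as two literals (at positions $2p_{i_\mathtt{b}},\,2p_{i_\mathtt{b}}{+}1$) and merges $\mu(\mathtt{a})=\mathtt{ab}$ into a single phrase sourced to the \emph{fixed} position $3$, using that $t_m[3..4]=\mathtt{ab}$ for every $m\ge 3$. This redirection sends the even position $2p_{i_\mathtt{a}}$ to the odd position $3$ and the odd position $2p_{i_\mathtt{a}}{+}1$ to the even position $4$ --- a deliberate parity flip. Because every other step of $f_{B'}$ preserves parity, the chain from $2p_{i_\mathtt{a}}$ can only terminate at the unique odd-position ground $\mathtt{a}$ (namely $2p_{i_\mathtt{b}}{+}1$) and the chain from $2p_{i_\mathtt{a}}{+}1$ only at the unique even-position ground $\mathtt{b}$ (namely $2p_{i_\mathtt{b}}$); neither can return to $2p_{i_\mathtt{a}}$ or $2p_{i_\mathtt{a}}{+}1$. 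Validity follows immediately, with no strengthened inductive hypothesis and no case analysis on where the ground phrases sit. That parity-flip-to-a-fixed-position trick is the missing piece in your sketch.
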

\begin{proof}
  Proof by induction.
  For $n = 2$ it is clear that there is a valid bidirectional scheme of size $4$.

  Suppose that for some $n\geq 2$, there is a valid bidirectional scheme
  $B_n = ((b_1,s_1),\ldots,(b_{k},s_{k}))$ of size $k$ for $\tm{n}$.
  We can assume that there are at least two ground phrases
  $b_{i_\mathtt{a}} = \tm{n}[p_{i_\mathtt{a}}]=\mathtt{a}$ and
  $b_{i_\mathtt{b}} = \tm{n}[p_{i_\mathtt{b}}]=\mathtt{b}$.
  Since $\tm{n+1}=\mu(\tm{n})$, we first consider a bidirectional scheme
  $B'$ for $\tm{n+1}$ where each phrase is constructed from phrases of $B_n$
  by applying $\mu$, with the small exception for the two ground phrases.
  More precisely, the phrases of $B'$ are
  $\mu(b_i)$ for $i \in [1,k]\setminus\{i_{\mathtt{a}},i_{\mathtt{b}}\}$,
  and two ground phrases from each of
  $\mu(b_{i_\mathtt{a}})=\mathtt{ab}$ and
  $\mu(b_{i_\mathtt{b}})=\mathtt{ba}$,
  resulting in a parsing of size $k+2$.
  For each non-ground phrase $\mu(b_i)$ in $B'$,
  we can either choose the source to be
  (i) $2p_{i_\mathtt{a}}$ or $2p_{i_\mathtt{b}}$
  if its length is $2$, or (ii) $2s_i$ otherwise.
  The latter is because
  $\mu(b_i) =\mu(\tm{n}[s_i..s_i+|b_i|-1]) = \mu(\tm{n})[2s_i..2s_i+2|b_i|-1] = \tm{n+1}[2s_i..2s_i+2|b_i|-1]$.
  The validity of $B'$ follows from the validity of $B_{n}$,
  and $f_{B'}$ has no cycles.
  It is easy to see that
  for any position $i$, the parities of $i$ and $f_{B'}(i)$ are the same (unless $f_{B'}(i)=\bot$).
  Thus,
  noticing that $\tm{n+1}[3..4] = \mathtt{ab}$,
  (1) the source of $t_{n+1}[3]=\mathtt{a}$ at an odd position can eventually be traced to the ground phrase at position $2p_{i_\mathtt{b}}+1$,
  and (2) the source of $t_{n+1}[4]=\mathtt{b}$ at an even position can eventually be traced to the ground phrase at position $2p_{i_\mathtt{b}}$.

  Next,
  we modify $B'$
  by combining the two consecutive ground phrases $(\mathtt{a},\bot)$ and $(\mathtt{b},\bot)$
  corresponding to $\mu(b_{i_\mathtt{a}})$,
  and replace them with a single $(\mathtt{ab},3)$.
  This results in a bidirectional scheme $B''$ of size $k+1$.
  From the above observations (1) and (2), it is clear that $B''$ is still valid.
  Thus, $B_{n+1} = B''$ is a valid bidirectional scheme for $t_{n+1}$ of size $k+1$, thereby proving the theorem.
  \qed
\end{proof}

\subsection{Lower Bound}
\begin{theorem}[Lower Bound]\label{thm:lowerbound}
  For $n\geq 2$, the smallest valid bidirectional scheme for $\tm{n}$ has size $n+2$.
\end{theorem}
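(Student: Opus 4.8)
The plan is to prove the matching lower bound $b(t_n)\geq n+2$ by induction on $n$, mirroring the structure of the upper bound but running the morphism backwards. The base case $n=2$ is a finite check. For the inductive step, I assume every valid bidirectional scheme for $t_n$ has size at least $n+2$, and I take an arbitrary valid scheme $B$ for $t_{n+1}$; the goal is to extract from $B$ a valid scheme for $t_n$ of size at most $|B|-1$, which immediately yields $|B|\geq (n+2)+1=n+3=(n+1)+2$.

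The key idea is a \emph{de-substitution} (inverse of $\mu$). Since $\mu(\mathtt{a})=\mathtt{ab}$ and $\mu(\mathtt{b})=\mathtt{ba}$, each character $t_n[j]$ corresponds to the block $t_{n+1}[2j..2j+1]$, and by Lemma~\ref{lem:aabb} these blocks are exactly the length-$2$ factors at even positions, each being $\mathtt{ab}$ or $\mathtt{ba}$. The plan is to show that phrase boundaries of $B$ can be ``aligned'' to these even block boundaries without increasing the phrase count, and that sources can be halved to give consistent sources in $t_n$. Concretely, I would first argue that one may assume every phrase of $B$ starts at an even position: a phrase starting at an odd position can be merged with a one-character prefix split off from its predecessor, and one shows via the parity-preservation properties (building on Lemmas~\ref{lem:aabb}, \ref{lem:overlapfree}, \ref{lem:abab}, \ref{lem:parity}) that sources can be chosen so that this realignment is still valid and does not raise the size. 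Then each even-aligned, even-length phrase $t_{n+1}[2a..2b-1]$ collapses to the phrase $t_n[a..b-1]$ with source $s_i/2$, and a valid scheme for $t_{n+1}$ descends to a valid scheme for $t_n$. Finally, the $+2\to+1$ accounting must account for the two ground phrases $\mathtt{a},\mathtt{b}$: in $t_{n+1}$ the images of these literals ($\mathtt{ab}$, $\mathtt{ba}$) cost genuinely more, which is exactly what gives the strict drop of one when passing down to $t_n$.

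I expect two steps to be the main obstacles. The first is the alignment/merging argument: unlike the forward direction, where we freely designed the scheme, here $B$ is adversarial, and splitting or merging phrases to force even-boundary alignment risks either breaking validity (creating a cycle in $f_B$) or increasing the size. The delicate point is proving that whenever a phrase crosses an odd boundary, the parity constraints force its source to respect block structure enough that a size-preserving, cycle-free realignment exists; here Lemma~\ref{lem:parity} is the crucial tool, since for all but a few short exceptional factors the parity of every occurrence is fixed, so a source of an even-length factor at an even position is itself forced to be even. The second obstacle is handling those short exceptional factors $\{\mathtt{aba},\mathtt{bab},\mathtt{ab},\mathtt{ba},\mathtt{a},\mathtt{b}\}$ of Lemma~\ref{lem:parity}, where the parity argument fails and one must reason directly about how such tiny phrases can be re-sourced or absorbed.

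A cleaner alternative I would keep in reserve is to avoid explicit induction on the scheme and instead argue via a potential/counting bound: associate to each phrase boundary of $B$ a ``cost'' tied to positions where the local structure of $t_{n+1}$ forces a fresh phrase, and sum these costs across the $\log$-many scales of the self-similar Thue--Morse structure to directly derive $|B|\geq n+2$. The hazard with this route is making the per-scale charging disjoint and airtight; I would use the overlap-freeness (Lemma~\ref{lem:overlapfree}) to guarantee that distinct scales charge distinct phrase boundaries. Whichever route, the heart of the difficulty is the same: controlling an adversarial parsing using the rigid parity and overlap-free structure of $t_n$, so I would invest the most effort in the alignment lemma and treat the descent to $t_n$ and the final arithmetic as routine bookkeeping.
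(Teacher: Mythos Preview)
Your overall strategy---de-substitute via $\mu^{-1}$, align phrase boundaries to even positions using the parity rigidity of Lemma~\ref{lem:parity}, treat the short exceptional factors $\{\mathtt{aba},\mathtt{bab},\mathtt{ab},\mathtt{ba},\mathtt{a},\mathtt{b}\}$ by hand, then halve---is exactly the route the paper takes. You have also correctly identified where the work lies: the alignment step must not create cycles or extra phrases, and the exceptional short factors are where the parity argument breaks down.

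The genuine gap is your inductive contract. You assume that from \emph{any} valid scheme $B$ for $t_{n+1}$ one can always extract a valid scheme for $t_n$ of size at most $|B|-1$. This is not true in general, and the paper's own analysis makes this explicit. After carrying out the alignment carefully, the paper obtains only
\[
|B_{n-1}| \;\leq\; |B_n| - \bigl\lceil(\ground{B_n}-2)/2\bigr\rceil,
\]
so when $\ground{B_n}=2$ (the minimum possible), a single de-substitution step may give merely $|B_{n-1}|=|B_n|$. Concretely, when the only two ground phrases are eliminated, the case analysis for the short exceptional factors can force the \emph{creation} of two new length-$2$ phrases (consecutive $\mathtt{ab}$ and consecutive $\mathtt{ba}$), exactly cancelling the saving. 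Your sentence ``in $t_{n+1}$ the images of these literals ($\mathtt{ab}$, $\mathtt{ba}$) cost genuinely more, which is exactly what gives the strict drop of one'' is precisely the step that fails: an adversarial $B$ need not present those images as recoverable slack.

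The fix, and what the paper actually does, is to abandon the per-step drop and argue \emph{amortized} over up to three consecutive de-substitutions: if the first step does not decrease the size, then it necessarily produces at least four ground phrases in $B_{n-1}$, and one tracks $\ground{\cdot}$ through subsequent steps to guarantee that for some $1\leq i\leq 3$ one has $|B_{n-i}|\leq|B_n|-i$. This also forces the base cases to be $n\in\{2,3,4\}$ rather than just $n=2$. Your alternative ``potential across $\log$ scales'' idea is in spirit this same amortization, but you would still need the ground-phrase bookkeeping to make the charging work.
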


To prove Theorem~\ref{thm:lowerbound}, we would like to, in essence, do the opposite of what we did in the proof of Theorem~\ref{thm:upperbound},
and show that we can construct a bidirectional scheme for $t_{n-1}$ of size $k-1$, given a bidirectional scheme for $t_n$ of size $k$.
However, the opposite direction involves halving the size of phrases, and thus does not work straightforwardly.
Nevertheless, we will show that this can be done in an amortized way,
and show the following.

\begin{lemma}\label{lemma:main}
  For any $n \geq 5$, if there exists a valid bidirectional scheme of size $k$ for $t_{n}$,
  then, for some $1 \leq i \leq 3$, there exists a valid bidirectional scheme of size at most $k-i$ for $t_{n-i}$.
\end{lemma}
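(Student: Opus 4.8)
The plan is to invert the construction from the proof of Theorem~\ref{thm:upperbound}: instead of applying $\mu$ to a scheme for $\tm{n-1}$ to obtain one for $\tm{n}$, I would \emph{contract} a given valid scheme $B$ for $\tm{n}$ along the block structure of $\mu$. Writing $N=2^n$ and $M=2^{n-1}$, every position $j$ of $\tm{n-1}$ corresponds to the $\mu$-block $\tm{n}[2j..2j+1]\in\{\mathtt{ab},\mathtt{ba}\}$, and by Lemma~\ref{lem:aabb} the first (equivalently, the second) character of such a block already determines the whole block. The goal is to turn $B$ into a valid scheme for $\tm{n-1}$ whose size drops by one per level. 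Since a single contraction may fail to save a phrase, I would instead produce schemes for $\tm{n-1},\tm{n-2},\tm{n-3}$ and argue that for some $i\in\{1,2,3\}$ the cumulative size drop is at least $i$; equivalently, writing $d_1,d_2,d_3\ge 0$ for the per-level drops, I must rule out $d_1=0$, $d_1+d_2\le 1$, and $d_1+d_2+d_3\le 2$ all holding simultaneously.

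It is convenient to measure size through \emph{continuations}: for $x\ge 1$ say $B$ continues at $x$ if $f_B(x-1)\ne\bot$, $f_B(x)\ne\bot$ and $f_B(x)=f_B(x-1)+1$, and let $C$ be the number of such $x$; then $B$ has exactly $N-C$ phrases, so $k=N-C$. For the contraction I would set, for each block $j$, the source $g(j):=f_B(2j)/2$ when $f_B(2j)$ is even, $g(j):=(f_B(2j+1)-1)/2$ when $f_B(2j)$ is odd but $f_B(2j+1)$ is odd, and make $j$ a ground phrase otherwise. Lemma~\ref{lem:parity} is the engine here: every phrase of length at least $4$ contains one of $\mathtt{aa},\mathtt{bb},\mathtt{abab},\mathtt{baba}$ and is therefore parity-determined, so inside such a phrase $s_i\equiv p_i\pmod 2$ and every block lying in it contracts cleanly. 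The only obstructions are phrases straddling a block boundary (those with an odd endpoint) and the short exceptional phrases of Lemma~\ref{lem:parity} with mismatched source parity; both are short, so I can split them into block-aligned pieces at a bounded, locally charged number of extra phrases. A direct count -- each break of $B$ spoils at most one of the $M-1$ block-adjacent continuation slots, and there are $k-1$ breaks -- then yields the robust one-step bound that the contracted scheme has size at most $k$, that is, $d_\ell\ge 0$ at every level.

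The strict decrease is where the real work lies. The clean count is short by exactly one: it guarantees $M-k$ usable continuations, whereas $d_1\ge 1$ (i.e.\ $C'\ge M-k+1$) needs one more. This missing unit must be supplied by a structural defect of $B$, and I would locate it among the features the setting hands us: the at-least-two ground phrases (recall $\ground{B}\ge 2$), phrase boundaries at odd positions, and the fixed prefix $\tm{n}[0..4]=\mathtt{abbab}$ together with $\tm{n-1}$ being a prefix of $\tm{n}$. Concretely, a break at the last position, an interior ground phrase, or an odd phrase boundary each frees the extra continuation and forces $d_1\ge 1$. The amortization handles the residual case in which none is available, namely when $B$ is so tightly block-aligned that $d_1=0$: I would show that such near-alignment propagates a quantifiable surplus (tracked by a potential counting the odd boundaries and ground phrases that are provably created upon contraction) to the next one or two levels, so that $d_1+d_2\ge 2$ or $d_1+d_2+d_3\ge 3$, delivering the required $i$.

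Two points need care. First, validity: I would prove $g$ is acyclic by projecting $g$-trajectories to $f_B$-trajectories on block-start positions, so that each application of $g$ corresponds to $f_B$-steps making strict progress toward a ground phrase; the absence of cycles in $f_B$ (validity of $B$) then transfers to $g$, provided the splits at misaligned boundaries are oriented consistently, which I would pin down from the few admissible local patterns via Lemma~\ref{lem:overlapfree}. Second, the accounting must be global rather than local, since splitting a boundary-crossing phrase touches two adjacent blocks at once. I expect the main obstacle to be exactly this amortized bookkeeping: characterizing the tightly aligned configurations in which a single level saves nothing, and proving that each such configuration necessarily pays back within three levels. The continuation count does the bulk of the work, but the $\pm 1$ separating $n+2$ from $n+1$ rests entirely on this defect analysis, which is why three levels of descent, rather than one, are needed.
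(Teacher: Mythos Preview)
Your high-level strategy --- contract along the $\mu$-block structure and amortize the savings over up to three levels --- is exactly the paper's. The paper, however, does not define the contracted function $g$ directly and then argue acyclicity after the fact; it first \emph{operationally rewrites} the parsing of $t_n$ (eliminating length-$1$ ground phrases, then length-$3$ phrases, via an explicit case analysis that keeps partner positions in the same phrase at every step) so that all phrases have even length and even source parity, and only then applies $\mu^{-1}$. This buys validity for free: since partners always share a phrase, the contracted function is literally the restriction of $f_{B_n}$ to even positions, and acyclicity is inherited. Your projection argument does not give this: in your Case~2 you follow $f_B$ from $2j{+}1$, not $2j$, so a $g$-trajectory stitches together pieces of \emph{different} $f_B$-trajectories (with a sideways jump inside a block whenever the governing case changes), and $f_B$'s acyclicity no longer transfers. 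Concretely, $f_B(2j_1)=2j_2$ together with $f_B(2j_2{+}1)=2j_1{+}1$ (with $f_B(2j_2)$ odd, forcing Case~2 at $j_2$) yields a $g$-cycle $j_1\to j_2\to j_1$ with no cycle in $f_B$; ruling this out needs more than ``projecting to block-start positions.''

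The second and larger gap is the amortization. You correctly isolate the missing unit in the continuation count, but your plan --- ``such near-alignment propagates a quantifiable surplus to the next one or two levels'' --- does not name the surplus or the mechanism. The paper's engine is the inequality $|B_{n-1}|\le |B_n|-\lceil(\ground{B_n}-2)/2\rceil$, obtained by tracking exactly how ground phrases are consumed in Step~1; the \emph{only} way a level fails to save a phrase is that a specific sub-case (their Case~3-1) fires twice, and that sub-case manufactures two extra relaxed ground phrases that become genuine ground phrases after $\mu^{-1}$, forcing $\ground{B_{n-1}}\ge 4$ and hence a guaranteed drop at the next level (and similarly one more level down if needed). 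Nothing in your continuation bookkeeping produces this kind of forward-propagating credit; ``odd boundaries and ground phrases created upon contraction'' is the right neighborhood, but you have not shown that a stall at level $n$ \emph{forces} $\ground{B_{n-1}}$ (or any other potential) to increase. Until you identify a concrete potential $\Phi$ with $|B_{n-1}|\le|B_n|-c\cdot\Phi(B_n)+O(1)$ and prove that $\Phi$ strictly grows whenever the $O(1)$ is paid, the three-level argument does not close.
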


Since the size of the smallest bidirectional scheme for $t_2$, $t_3$, $t_4$
can be confirmed to be respectively $4, 5, 6$ by computer analysis,
this with Lemma~\ref{lemma:main} implies Theorem~\ref{thm:lowerbound}.

In the rest of the section, we give an algorithm that, given a bidirectional scheme $B_n$ for $t_n$,
constructs a bidirectional scheme $B_{n-1}$ for $t_{n-1}$,
and claim that applying the algorithm repeatedly
$i$ times, for some $1 \leq i \leq 3$, we obtain
a bidirectional scheme $B_{n-i}$ for $t_{n-i}$ such that $|B_{n-i}| \leq |B_{n}|-i$.
The algorithm consists of 3 main steps:
\begin{enumerate}
  \item\label{step:elim_ground} Elimination of length-$1$ ground phrases.
  \item\label{step:elim_odd} Elimination of odd length phrases.
  \item\label{step:inverse_morphism} Application of the inverse morphism $\mu^{-1}$ on all phrases of the modified parsing.
\end{enumerate}

The goal of Steps~\ref{step:elim_ground} and~\ref{step:elim_odd} is to modify the phrases of $B_{n}$
to construct a bidirectional scheme $B'_n$ so that all phrases in $B'_n$ will be of even length.
When modifying the phrases, we must take care in 1) defining the source of the phrase,
and 2) ensuring that no cycles are introduced in the resulting bidirectional scheme $B_{n-1}$.
To make this clear, we temporarily relax the definition for ground phrases in $B'_n$ during the modification,
so that the ground phrases of $B'_n$ are phrases of length $2$ that start at even positions.
In this way, we can be sure that any position in a length-$2$ phrase starting at an even position in $B'_n$ is not involved in a cycle.
In Step~\ref{step:inverse_morphism}, we create a new bidirectional scheme~$B_{n-1}$ of $t_{n-1}$ by translating all phrase lengths and sources of $B'_n$ according to the inverse morphism $\mu^{-1}$,
i.e., we map each non-ground phrase $(b'_i,s'_i)$ of $B'_n$ to
the phrase $(\mu^{-1}(b'_i),s'_i/2)$ in $t_{n-1}$.
The length-$2$ ground phrases in $B'_n$ become length-$1$ ground phrases in $B_{n-1}$, and thus
we obtain a valid bidirectional scheme $B_{n-1}$ for $t_{n-1}$, without the relaxation, and the same size as $B'_n$.

\subsubsection{Eliminating Length-$1$ Ground Phrases}\label{secLengthOne}
The operation is done analogously and symmetrically for
any length-$1$ ground phrase ($\mathtt{a}$ or $\mathtt{b}$) that may occur at an even or odd position.
We describe in detail the case for a ground phrase with character $\mathtt{a}$ that occurs at some odd position $2i+1$.

For a consecutive pair of positions $2i,2i+1$, we call one a {\em partner} of the other.
Let $i_\mathtt{b} = 2i$ be the partner position of the length-$1$ ground phrase $\mathtt{a}$, i.e.,
$t_n[i_\mathtt{b}..i_\mathtt{b}+1] = \mathtt{ba}$.
The idea is to (re)move the phrase boundary that separates partner positions so that the ground phrase disappears.
Since we are considering the case where the ground phrase is at an odd position,
we extend the phrase $(b_i, s_i)$ containing position $i_\mathtt{b}$ by one character,
so that it includes the length-$1$ ground phrase $t_n[i_\mathtt{b}+1]=\mathtt{a}$, thereby eliminating it.
If possible, we would like to keep the source of the extended phrase the same, i.e.,
change $(b_i, s_i)$ to $(b_i\mathtt{a},s_i)$,
or equivalently, change $f_{B_n}(i_\mathtt{b}+1) =\bot$ to $f_{B_n}(i_\mathtt{b}+1) = s_i+|b_i|$.
Note that if the parity of $f_{B_n}(i_\mathtt{b})$ is equal to that of
$i_\mathtt{b}$,
this is always possible (i.e., $t_n[f_{B_n}(i_\mathtt{b})+1]=\mathtt{a}$ always holds).
However, it may be that the position $i_\mathtt{b}+1$ gets involved in a cycle, due to this change.
Notice that since we started from a valid (relaxed) bidirectional scheme, it is guaranteed that
$i_\mathtt{b}$ is not involved in a cycle, i.e.,
$f^j_{B_n}(i_\mathtt{b}) \neq i_\mathtt{b}$ for any $j \geq 1$.
Therefore, we further modify the phrase boundaries, if necessary,
to ensure that the source of $t_n[i_\mathtt{b}+1]=\mathtt{a}$ will belong in the same phrase as the source of $t_n[i_\mathtt{b}]=\mathtt{b}$.
This is repeated until we are sure that all these changes made to eliminate the original length-$1$ ground phrase $\mathtt{a}$ do not introduce any cycles in the final bidirectional scheme.
In other words,  we ensure, for some {\em sufficiently large} $j'$,
$f^{j}_{B_n}(i_\mathtt{b}+1) = f^{j}_{B_n}(i_\mathtt{b})+1$ for all $1\leq j \leq j'$.
Then, from the acyclicity of position $i_\mathtt{b}$, the acyclicity of position $i_\mathtt{b}+1$ follows.

There are six cases where the process terminates, as shown in Figure~\ref{fig:elim_ground} (Case 3 is further divided into two sub-cases).
As noted above, as long as the parity of $f^j_{B_n}(i_\mathtt{b})$ is the same as
that of $f^{j-1}_{B_n}(i_\mathtt{b})$,
the character of $f^j_{B_n}(i_\mathtt{b})$'s partner is always $\mathtt{a}$, and we can ensure that
$f^{j-1}_{B_n}(i_\mathtt{b})$ and $f^{j-1}_{B_n}(i_\mathtt{b}+1)$
are in the same phrase by only (possibly) setting
$f^j_{B_n}(i_\mathtt{b}+1) = f^j_{B_n}(i_\mathtt{b})+1$.
Thus, we consider the cases where
$j'\geq 1$ is the smallest integer such that
the parities of $f^{j'-1}_{B_n}(i_\mathtt{b})$ and $f^{j'}_{B_n}(i_\mathtt{b})$ differ,
in which case, Lemma~\ref{lem:parity} implies that
$f^{{j'}-1}_{B_n}(i_\mathtt{b})$ is contained in a phrase in
$\{\mathtt{aba}, \mathtt{bab}, \mathtt{ab}, \mathtt{ba}, \mathtt{b}  \}$.
Each of the six cases corresponds to a distinct occurrence of $\mathtt{b}$ in the strings of this set.
We show that in each case, we can modify the phrases so that both
$f^{j'-1}_{B_n}(i_\mathtt{b})$ and $f^{j'-1}_{B_n}(i_\mathtt{b}+1)$
are in the same length-2 phrase, i.e., a relaxed ground phrase, and be sure
that $i_\mathtt{b}+1$ will not be involved in a cycle in the final bidirectional scheme.
The details of each case are described in Figure~\ref{fig:elim_ground}.

Although Cases 1, 2, 4 introduce a new length-$1$ ground phrase,
the number of phrase boundaries that separate partner positions always decreases at the starting point, and never increases.
Therefore the whole process terminates at some point, at which point, all length-$1$ ground phrases have been eliminated.

\begin{figure}[h!]
  \centerline{
    \begin{tabular}{c|c}\hline
      \parbox[c]{0.24\textwidth}{
        \includegraphics[width=0.23\textwidth]{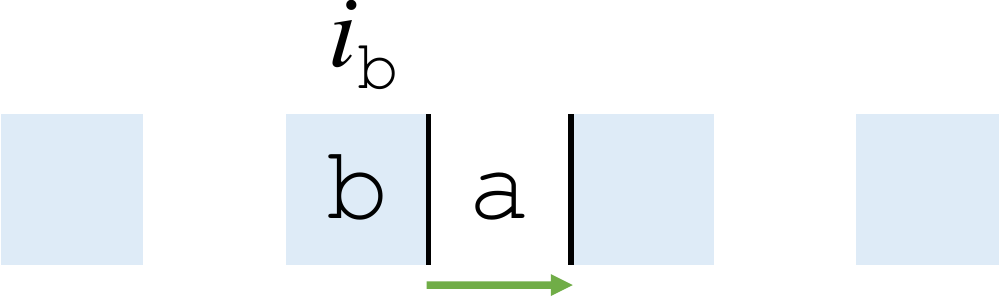}
      }
       &
      \parbox[c][][t]{0.75\textwidth}{ \vspace{2pt}
      Starting point.
      We eliminate the length-$1$ ground phrase $t_n[i_\mathtt{b}+1]=\mathtt{a}$ at an odd position $i_\mathtt{b}+1$,
      by including it in the same phrase as is partner $t_n[i_\mathtt{b}]=\mathtt{b}$, in this case, on its left.
      We can do this by modifying the source of $\mathtt{a}$ to point to the position next to the source of $\mathtt{b}$, i.e.,
      setting $f_{B_n}(i_\mathtt{b}+1) = f_{B_n}(i_\mathtt{b})+1$.
      This is done recursively at the source positions, until we reach one of the following cases,
      where the
      source of $\mathtt{b}$ no longer points to a position of the same parity.
      }  \\\hline
      \parbox[c]{0.24\textwidth}{
        \includegraphics[width=0.23\textwidth]{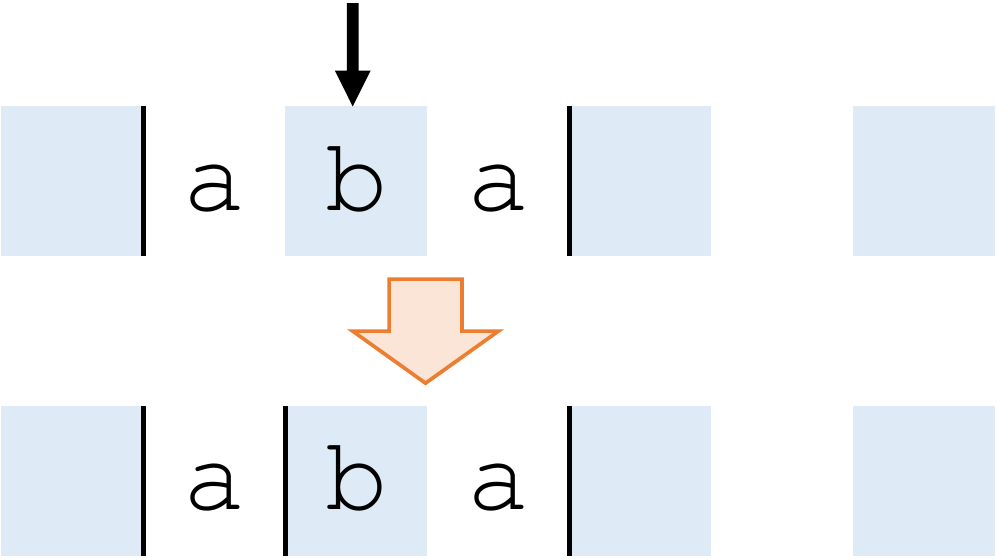}
      }
       &
      \parbox[c]{0.75\textwidth}{ \vspace{2pt}
      Case 1:
      We introduce a new length-$1$ ground phrase, and modify the boundaries.
      We are done since both
      $t_n[f_{B_n}^{j'-1}(i_\mathtt{b})]=\mathtt{b}$ and $t_n[f_{B_n}^{j'-1}(i_\mathtt{b})+1]=\mathtt{a}$
      are in a length-$2$ phrase starting at an even position, i.e., a relaxed ground phrase.
      We are sure that $i_\mathtt{b}+1$ is not involved in a cycle.
      We recursively apply the procedure to the new length-$1$ ground phrase $\mathtt{a}$ at an odd position.
      }  \\\hline
      \parbox[c]{0.24\textwidth}{
        \includegraphics[width=0.23\textwidth]{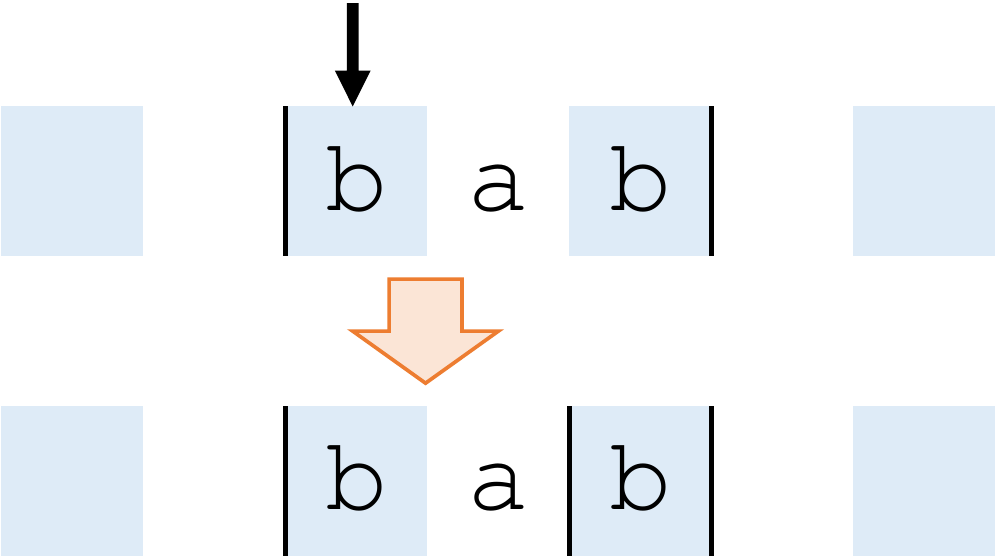}
      }
       &
      \parbox[c]{0.75\textwidth}{
        Case 2:
        Same as Case 1, with the exception that the new length-1 ground phrase is $\mathtt{b}$ at an even position.
      }  \\\hline
      \parbox[c]{0.24\textwidth}{
        \includegraphics[width=0.23\textwidth]{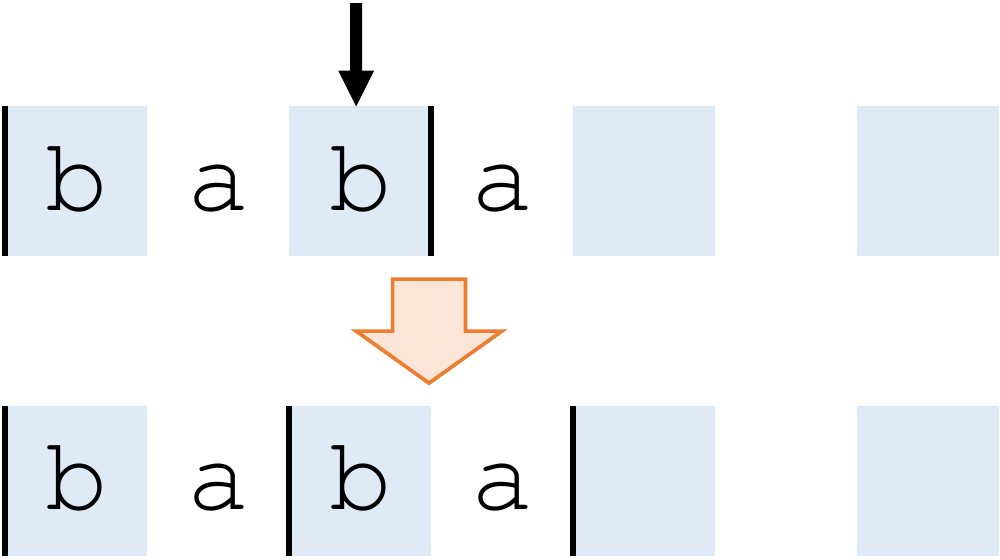}
      }
       &
      \parbox[c]{0.75\textwidth}{ \vspace{2pt}
      Case 3-1:
      This case is when there are no consecutive phrases of $\mathtt{ba}$ and $\mathtt{ba}$ in the current bidirectional scheme.
      We introduce a new length-$2$ phrase, and modify the boundaries.
      We are done since both
      $t_n[f_{B_n}^{j'-1}(i_\mathtt{b})]=\mathtt{b}$ and $t_n[f_{B_n}^{j'-1}(i_\mathtt{b})+1]=\mathtt{a}$
      are in a length-$2$ phrase starting at an even position, i.e., a relaxed ground phrase.
      }  \\\hline
      \parbox[c]{0.24\textwidth}{
        \includegraphics[width=0.23\textwidth]{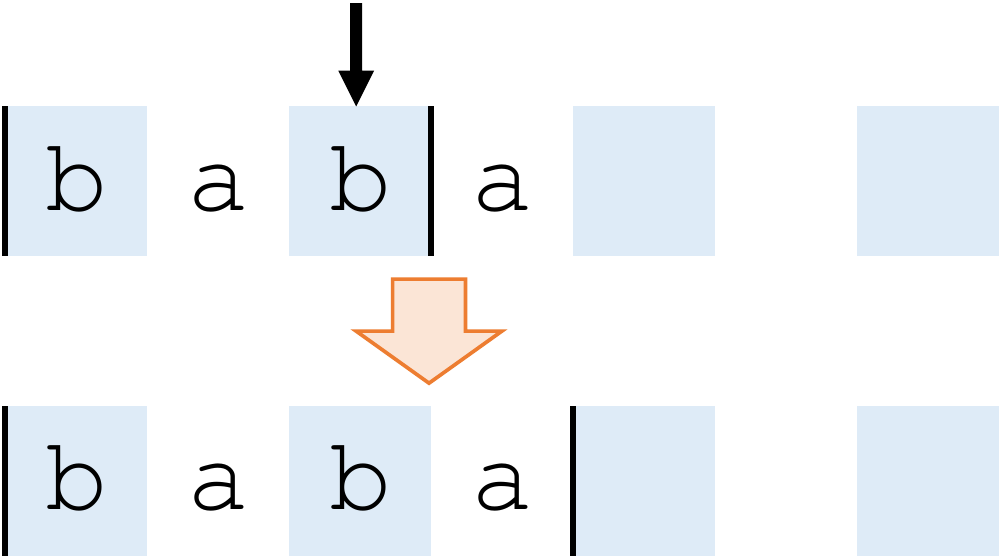}
      }
       &
      \parbox[c]{0.75\textwidth}{ \vspace{2pt}
        Case 3-2:
        This case is when there already are
        consecutive phrases of $\mathtt{ba}$ and $\mathtt{ba}$ in the current bidirectional scheme.
        We create the phrase $\mathtt{baba}$ and make its source be the consecutive phrases of $\mathtt{ba}$ and $\mathtt{ba}$
        (possibly constructed in Case 3-1). No cycles are introduced since the new source are relaxed ground phrases.
      }  \\\hline
      \parbox[c]{0.24\textwidth}{
        \includegraphics[width=0.23\textwidth]{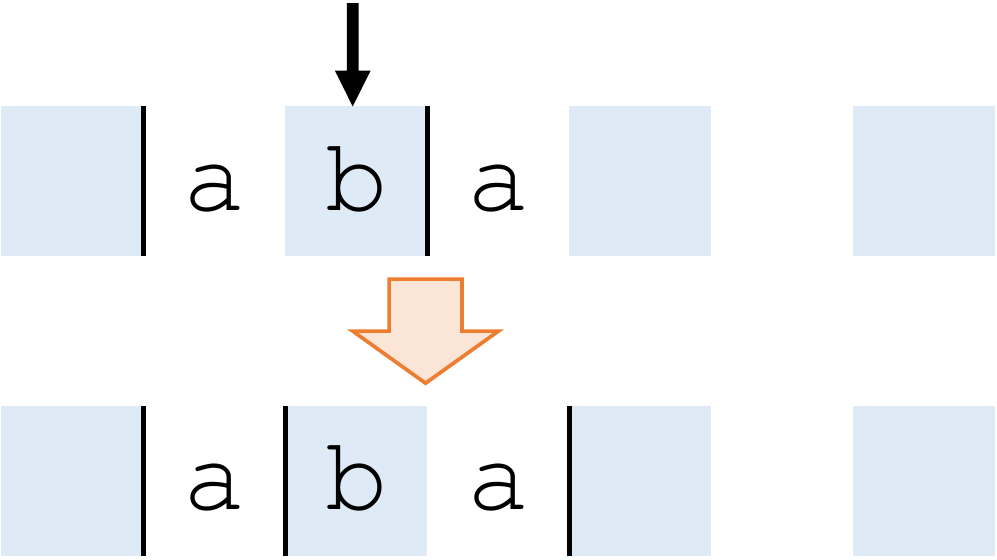}
      }
       &
      \parbox[c]{0.75\textwidth}{
        Case 4:
        Same as Case 1.
      }  \\\hline
      \parbox[c]{0.24\textwidth}{
        \includegraphics[width=0.23\textwidth]{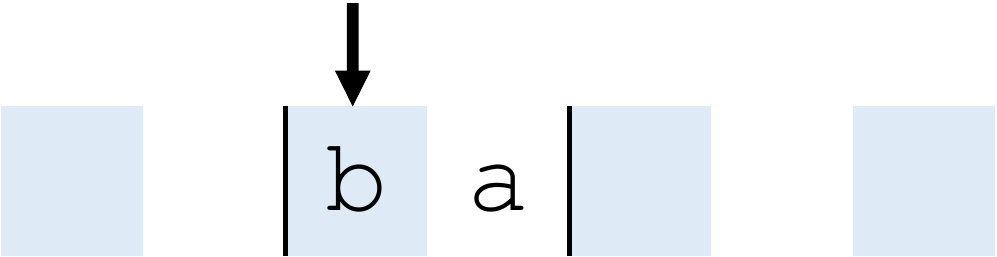}
      }
       &
      \parbox[c]{0.75\textwidth}{ \vspace{2pt}
      Case 5:
      There is nothing to do.
      We are done since both
      $t_n[f_{B_n}^{j'-1}(i_\mathtt{b})]=\mathtt{b}$ and $t_n[f_{B_n}^{j'-1}(i_\mathtt{b})+1]=\mathtt{a}$
      are in a length-$2$ phrase starting at an even position, i.e., a relaxed ground phrase.
      }  \\\hline
      \parbox[c]{0.24\textwidth}{
        \includegraphics[width=0.23\textwidth]{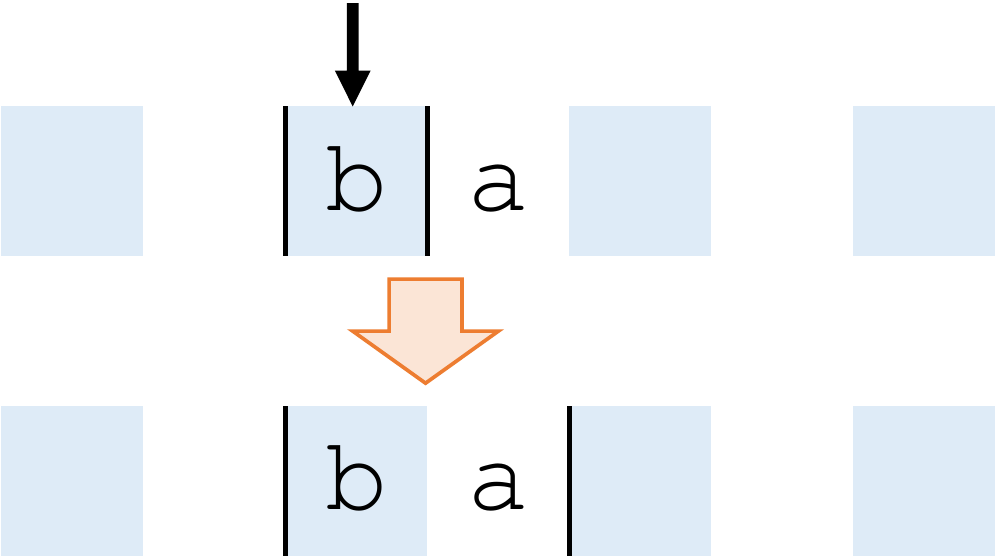}
      }
       &
      \parbox[c]{0.75\textwidth}{ \vspace{2pt}
      Case 6:
      We expand the phrase to include its partner.
      We are done, since both
      $t_n[f_{B_n}^{j'-1}(i_\mathtt{b})]=\mathtt{b}$ and $t_n[f_{B_n}^{j'-1}(i_\mathtt{b})+1]=\mathtt{a}$
      are in a length-$2$ phrase starting at an even position, i.e., a relaxed ground phrase.
      }  \\\hline
    \end{tabular}
  }\caption{
  Terminal cases for eliminating a length-$1$ ground phrase $t_n[i_\mathtt{b}+1]=\mathtt{a}$ at an odd position $i_\mathtt{b}+1$ (see Section~\ref{secLengthOne}).
  The shaded squares are even positions. The vertical bars denote phrase boundaries.
  The black arrow points to the position $f^{j'-1}_{B_n}(i_\mathtt{b})$, where $j'\geq 1$ is the smallest integer such that
  the parities of $f^{j'-1}_{B_n}(i_\mathtt{b})$ and $f^{j'}_{B_n}(i_\mathtt{b})$ differ.
  The first line and second line of each case (except Case 5) respectively show the phrase boundaries before and after the modification.
  }\label{fig:elim_ground}
\end{figure}

\subsubsection{Eliminating Odd Length Phrases}\label{secOddLength}
In this step, we eliminate all remaining phrases with odd lengths.
Since there are no more length-$1$ ground phrases, we first focus on removing phrases $\mathtt{aba}$ and $\mathtt{bab}$ of length $3$.
Below, we describe the operation for removing a phrase $\mathtt{aba}$ that starts at an odd position.
The other cases are analogous or symmetric.

Starting with an occurrence of phrase $\mathtt{aba}$ that starts at an odd position $i_\mathtt{b}+1$,
we know that this phrase is preceded by $\mathtt{b}$.
We move the phrase boundary that separates partner positions,
so that the length-$3$ phrase shrinks to a length-$2$ phrase starting at an even position,
i.e., a relaxed ground phrase, in this case, by expanding the phrase to its left.
Since we have changed the source of the $\mathtt{a}$ at position  $i_\mathtt{b}+1$, we ensure that
for some sufficiently large $j'$,
$f^{j}_{B_n}(i_\mathtt{b}+1) = f^{j}_{B_n}(i_\mathtt{b})+1$ for all $1\leq j\leq j'$,
as we did for the elimination of length-$1$ ground phrases,
so that $i_\mathtt{b}+1$ is not involved in a cycle.

There are five cases where the process terminates, as shown in Figure~\ref{fig:elim_length3}.
As noted previously, as long as the parity of $f^j_{B_n}(i_\mathtt{b})$ is the same as
that of $f^{j-1}_{B_n}(i_\mathtt{b})$,
then the character of $f^j_{B_n}(i_\mathtt{b})$'s partner is always $\mathtt{a}$, and we can ensure that
$f^{j-1}_{B_n}(i_\mathtt{b})$ and $f^{j-1}_{B_n}(i_\mathtt{b}+1)$ are in the same phrase by only (possibly) setting
$f^j_{B_n}(i_\mathtt{b}+1) = f^j_{B_n}(i_\mathtt{b})+1$.
Thus, we consider the cases where
$j'\geq 1$ is the smallest integer such that
the parities of $f^{j'-1}_{B_n}(i_\mathtt{b})$ and $f^{j'}_{B_n}(i_\mathtt{b})$ differ,
in which case, Lemma~\ref{lem:parity} and the previous step implies that
$f^{{j'}-1}_{B_n}(i_\mathtt{b})$ is contained in a phrase in
$\{ \mathtt{aba}, \mathtt{bab}, \mathtt{ab}, \mathtt{ba}  \}$.
Each of the five cases corresponds to a distinct occurrence of $\mathtt{b}$ in strings of this set.
The details of each case are described in Figure~\ref{fig:elim_length3}.

After eliminating all phrases $\mathtt{aba}$ and $\mathtt{bab}$ of length $3$,
all remaining  phrases are either of length $2$ or do not belong to the set $\{ \mathtt{aba},\mathtt{bab},\mathtt{ab},\mathtt{ba},\mathtt{a},\mathtt{b}\}$.
Therefore, we can move all phrase boundaries that separate partner positions to the right (or all of them to the left)
and update the sources accordingly without introducing cycles,
since length-$2$ phrases starting at odd positions become relaxed ground phrases,
and the occurrences of each of the other phrases have the same parity due to Lemma~\ref{lem:parity}.
Thus, we now have a valid bidirectional scheme $B'_n$ where all phrases are of even length, and length-$2$ phrases are considered to be relaxed ground phrases.

\begin{figure}[h!]
  \centerline{
    \begin{tabular}{c|c}\hline
      \parbox[c]{0.25\textwidth}{
        \includegraphics[width=0.23\textwidth]{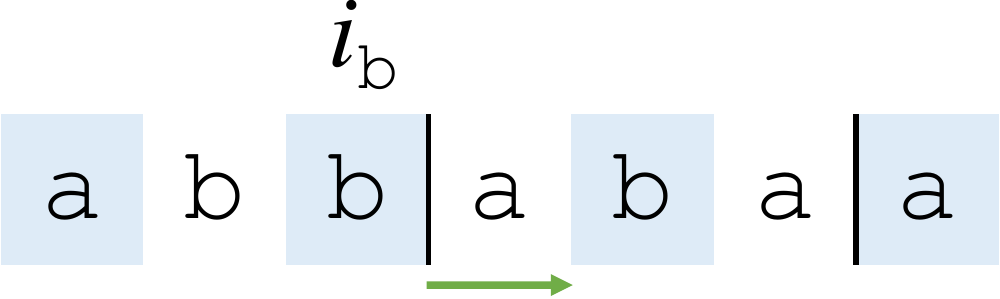}
      }
       &
      \parbox[c]{0.7\textwidth}{ \vspace{2pt}
      Starting point. We wish to eliminate the length-$3$ phrase $\mathtt{aba}$ starting at an odd position $i_\mathtt{b}+1$.
      We move the boundary so that the length-$3$ phrase shrinks to a length-$2$ phrase that starts at an even position.
      In this case, we extend the phrase on its left side to include $t_n[i_\mathtt{b}+1]=\mathtt{a}$.
      We can do this by modifying the source of $\mathtt{a}$ to point to the position next to the source of $\mathtt{b}$, i.e., setting $f(i_\mathtt{b}+1) = f(i_\mathtt{b})+1$.
      This is done recursively at the source positions, until we reach one of the following cases,
      where
      the source of $\mathtt{b}$ no longer points to a position of the same parity.
      }  \\\hline
      \parbox[c]{0.25\textwidth}{
        \includegraphics[width=0.23\textwidth]{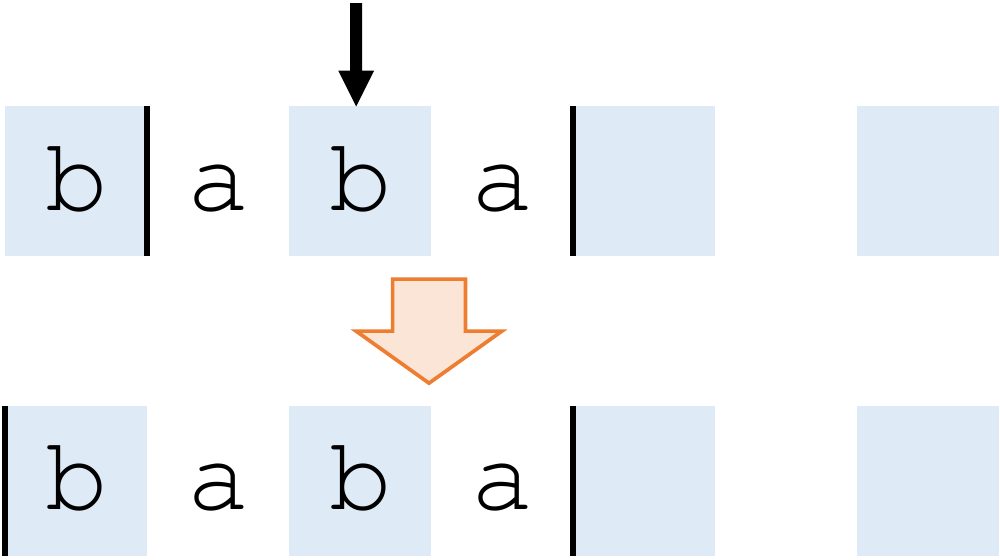}
      }
       &
      \parbox[c]{0.7\textwidth}{ \vspace{2pt}
      Case 1:
      Noticing that the phrase is a substring of $\mathtt{baba}$, we expand the phrase and make the source point to
      $i_\mathtt{b}$. This is possible because $t_n[i_\mathtt{b}..i_\mathtt{b}+3]=\mathtt{baba}$.
      Also, since each of the $\mathtt{ba}$'s can finally be traced to the relaxed ground phrase created at the starting point, i.e., $\mathtt{ba}$ at position $i_\mathtt{b}+2$,
      we are done. \vspace{2pt}
      }  \\\hline
      \parbox[c]{0.25\textwidth}{
        \includegraphics[width=0.23\textwidth]{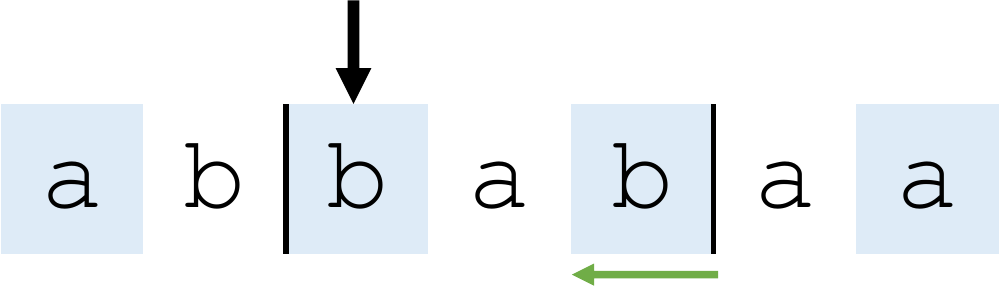}
      }
       &
      \parbox[c]{0.7\textwidth}{ \vspace{2pt}
      Case 2:
      We reach a different length-$3$ phrase $\mathtt{bab}$ that we wish to eliminate, that ends at an even position.
      We recursively apply the procedure to eliminate the phrase $\mathtt{bab}$.
      In this case, that will shrink this length-$3$ phrase to a length-$2$ phrase by expanding
      the phrase to its right.
      Then, we are done with the elimination of the original length-$3$ phrase $\mathtt{aba}$,
      since
      $t_n[f_{B_n}^{j'-1}(i_\mathtt{b})]=\mathtt{b}$ and $t_n[f_{B_n}^{j'-1}(i_\mathtt{b})+1]=\mathtt{a}$
      are in a length-$2$ phrase starting at an even position, i.e., a relaxed ground phrase.
      }  \\\hline
      \parbox[c]{0.25\textwidth}{
        \includegraphics[width=0.23\textwidth]{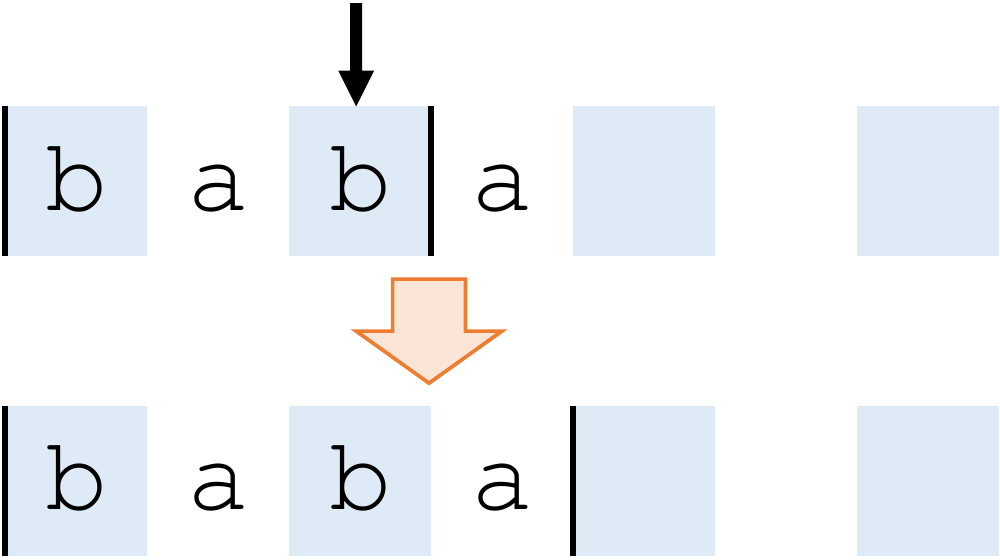}
      }
       &
      \parbox[c]{0.7\textwidth}{
        Case 3:
        Same as Case 1.
      }  \\\hline
      \parbox[c]{0.25\textwidth}{
        \includegraphics[width=0.23\textwidth]{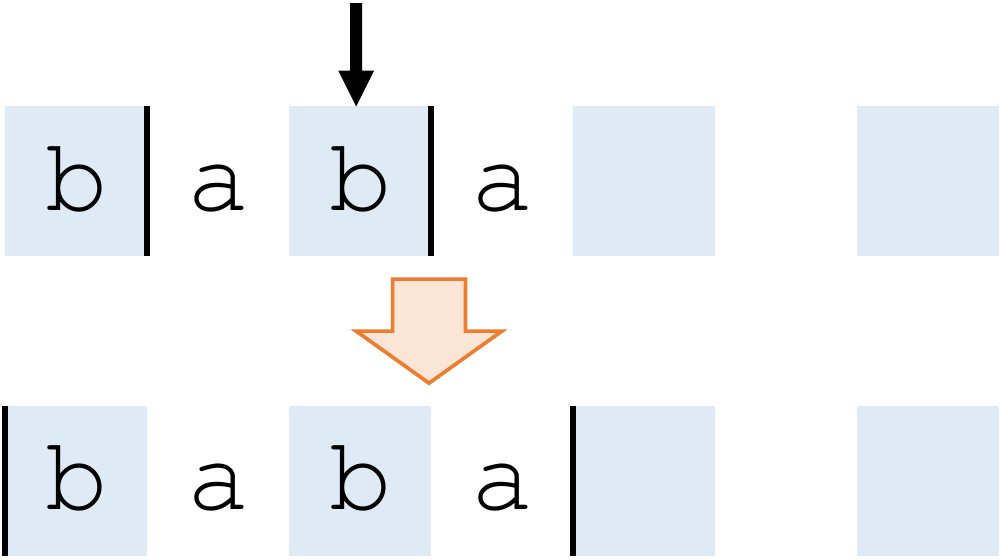}
      }
       &
      \parbox[c]{0.7\textwidth}{
        Case 4:
        Same as Case 1.
      }  \\\hline
      \parbox[c]{0.25\textwidth}{
        \includegraphics[width=0.23\textwidth]{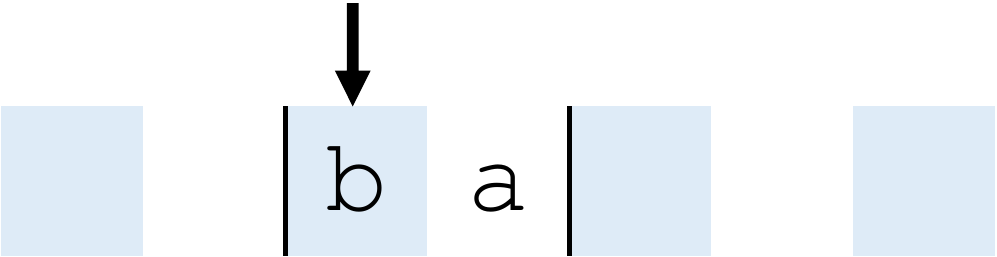}
      }
       &
      \parbox[c]{0.7\textwidth}{ \vspace{2pt}
      Case 5:
      There is nothing to do.
      We are done, since both
      $t_n[f_{B_n}^{j'-1}(i_\mathtt{b})]=\mathtt{b}$ and $t_n[f_{B_n}^{j'-1}(i_\mathtt{b})+1]=\mathtt{a}$
      are in a length-$2$ phrase starting at an even position, i.e., a relaxed ground phrase.
      }  \\\hline
    \end{tabular}
  }\caption{
    Terminal cases for eliminating a length-$3$ phrase $\mathtt{aba}$ that starts at an odd position $i_\mathtt{b}+1$ (see Section~\ref{secOddLength}).
    The shaded squares are even positions. The vertical bars denote phrase boundaries.
    The black arrow points to the position $f^{j'-1}_{B_n}(i_\mathtt{b})$, where $j'\geq 1$ is the smallest integer such that
    the parities of $f^{j'-1}_{B_n}(i_\mathtt{b})$ and $f^{j'}_{B_n}(i_\mathtt{b})$ differ.
    The first and second lines in Cases 1, 3, 4 show the phrase boundaries before and after the modification.
    The characters outside the phrase considered for each case can be inferred from being a partner of a phrase, and also from Lemma~\ref{lem:overlapfree}.
  }\label{fig:elim_length3}
\end{figure}

\subsubsection{Analysis of the Number of Phrases}
It is easy to see that Steps~\ref{step:elim_odd} and~\ref{step:inverse_morphism} do not increase the number of phrases.
Also, Step~\ref{step:elim_odd} does not decrease the number of length-$2$ phrases that start at even positions,
i.e., relaxed ground phrases, created in Step~\ref{step:elim_ground},
which will become ground phrases in $B_{n-1}$.
Thus, we focus on the analysis of Step~\ref{step:elim_ground}.

Examining each case of Fig.~\ref{fig:elim_ground}, we can see that while at the start
we eliminate a length-$1$ ground phrase and decrease the number of phrases,
Cases 1, 2, 3-1, and 4
introduce a new phrase, thus do not change the total number of phrases.
Also, notice that in Case 6, two ground phrases are eliminated, while the total number of phrases decreases only by one, since the second length-$1$ ground phrase is expanded.
Case 3-1 can occur in total at most twice, once for consecutive phrases of $\mathtt{ba}$ and
once for consecutive phrases of $\mathtt{ab}$.
Thus, we obtain the following inequality:
\begin{eqnarray}
  |B_{n-1}| \leq |B_n| - \lceil(\ground{B_n}-2)/2\rceil.\label{eqn:firstineq}
\end{eqnarray}

If $|B_{n-1}| \leq |B_n|-1$, then we can choose $i=1$ for Lemma~\ref{lemma:main} and are done.
Otherwise, $|B_{n-1}|=|B_n|$. This implies that $\ground{B_n} = 2$, and also that Case 3-1 was applied twice.
Thus, there exists at least 2 phrases of $\mathtt{ab}$ and $\mathtt{ba}$ each,
which are converted by $\mu^{-1}$ to ground phrases in $B_{n-1}$,
implying $\ground{B_{n-1}} \geq 4$.
Then, applying Equation~(\ref{eqn:firstineq}) for $n-2$, we have
\begin{eqnarray*}
  |B_{n-2}| &\leq& |B_{n-1}| - \lceil(\ground{B_{n-1}}-2)/2\rceil\\
  &\leq&|B_{n-1}| - 1 = |B_n| - 1.
\end{eqnarray*}

If $|B_{n-2}| \leq |B_n|-2$, then we can choose $i=2$ for Lemma~\ref{lemma:main}.
Otherwise, $|B_{n-2}|=|B_n|-1$.
This implies that $\ground{B_{n-1}} = 4$ and that Case 3-1 was applied twice, and Case 6 was applied once.
Therefore, we get $\ground{B_{n-2}}\geq 5$.
Finally, applying Equation~(\ref{eqn:firstineq}) for $n-3$, we have
\begin{eqnarray*}
  |B_{n-3}| &\leq& |B_{n-2}|  - \lceil(\ground{B_{n-2}}-2)/2\rceil\\
  &\leq& |B_{n-2}| - 2\\
  & =& |B_n| - 3.
\end{eqnarray*}
This proves Lemma~\ref{lemma:main}, and thus Theorem~\ref{thm:lowerbound}.

\section{Conclusion}
We have shown that for any $n\geq 2$, the size $b(t_n)$
of the smallest bidirectional scheme for the $n$-th Thue--Morse word $t_n$ is exactly $n+2$.
From the result that the smallest string attractor of $t_n$ is $4$ for any $n\geq 4$~\cite{DBLP:conf/spire/KutsukakeMNIBT20}
and that $|t_n| = 2^n$,
we have shown that Thue--Morse words are an example of a family of strings $\{S_n\}_{n \ge 1}$ in which each string~$S_n$ has
$b(S_n) = \Theta(\gamma(S_n) \log \frac{|S_n|}{\gamma(S_n)})$ as the size of its smallest bidirectional parsing,
where $\gamma(S_n)$ is the size of its smallest string attractor, and $|S_n| = 2^n$ is its length.
Note that we can generalize this to hold for any $\gamma\geq 4$:
Given a $\gamma \geq 4$, concatenate $k=\lfloor \gamma/4\rfloor$ copies of $t_n$,
each using distinct letters from a different binary alphabet.
Finally, we add $(\gamma\bmod 4)$ more distinct characters to make the smallest string attractor of the
resulting string exactly $\gamma$.
We thus can obtain a string of length $N=k\cdot 2^n+O(1)$ with $b =\Theta(kn)=\Theta(\gamma\log\frac{N}{\gamma})$.

Our result shows for the first time the separation between $\gamma$ and $b$,
i.e., there are string families such that $\gamma=o(b)$.
Whether this can be achieved by a family of binary strings is not yet known.
Although it is still open  whether $O(\gamma\log N)$ bits is enough to represent any string of length $N$,
it seems not possible by dictionary compression, i.e., copy/pasting within the string.
\bibliographystyle{splncs04}
\bibliography{refs}

\end{document}